\documentclass[letterpaper, 10pt, twocolumn]{ieeeconf}      

\IEEEoverridecommandlockouts                              
\overrideIEEEmargins

\def\BibTeX{{\rm B\kern-.05em{\sc i\kern-.025em b}\kern-.08em
    T\kern-.1667em\lower.7ex\hbox{E}\kern-.125emX}}
    
\usepackage[left=54pt, right=54pt,bottom=54pt, top=54pt]{geometry}

\usepackage{amsmath,mathrsfs,amsfonts,amssymb,graphicx,epsfig,stmaryrd}
 \usepackage{amsthm}
\usepackage{subcaption}
\usepackage{caption}
\usepackage{color,multirow,rotating}
\usepackage{algorithm,algpseudocode,algorithmicx}
\usepackage{cite,url,framed,bm,balance,dsfont,varwidth}
\usepackage{nicefrac,stmaryrd}

\setlength{\tabcolsep}{1.1pt}

\newtheorem{proposition}{Proposition}
\newtheorem{remark}{Remark}

\usepackage{cleveref}
\usepackage{tikz}
\usetikzlibrary{calc,trees,positioning,arrows,chains,shapes.geometric,decorations.pathreplacing,decorations.pathmorphing,shapes, matrix,shapes.symbols}

\DeclareMathOperator*{\proj}{proj}

\newcommand{\differential}{{\rm{d}}}

\newcommand{\diag}{\mathrm{diag}}

\newcommand{\RNum}[1]{\uppercase\expandafter{\romannumeral #1\relax}}

\allowdisplaybreaks

\title{\LARGE\textbf{
Path Structured Multimarginal Schr\"{o}dinger Bridge for Probabilistic\\Learning of Hardware Resource Usage by Control Software}
}

\author{Georgiy A. Bondar, Robert Gifford, Linh Thi Xuan Phan, Abhishek Halder
\thanks{Georgiy A. Bondar is with the Department of Applied Mathematics, University of California, Santa Cruz, CA 95064, USA,
       {\tt\small{gbondar@ucsc.edu}}.\\
       Robert Gifford and Linh Thi Xuan Phan are with the Department of Computer and Information Science, University of Pennsylvania, Philadelphia, PA 19104, USA,
        {\tt\small{\{rgif,linhphan\}@seas.upenn.edu}}.\\
        Abhishek Halder is with the Department of Aerospace Engineering, Iowa State University, Ames, IA 50011, USA,
        {\tt\small{ahalder@iastate.edu}}.\\
        This work was supported by NSF grants 2112755, 2111688 and  1750158.
}}

\begin{document}

\maketitle

\bstctlcite{IEEEexample:BSTcontrol} 

\begin{abstract}
Solution of the path structured multimarginal Schr\"{o}dinger bridge problem (MSBP) is the most-likely measure-valued trajectory consistent with a sequence of observed probability measures or distributional snapshots. We leverage recent algorithmic advances in solving such structured MSBPs for learning stochastic hardware resource usage by control software. The solution enables predicting the time-varying distribution of hardware resource availability at a desired time with guaranteed linear convergence. We demonstrate the efficacy of our probabilistic learning approach in a model predictive control software execution case study. The method exhibits rapid convergence to an accurate prediction of hardware resource utilization of the controller. The method can be broadly applied to any software to predict cyber-physical context-dependent performance at arbitrary time. 
\end{abstract}

\section{Introduction}\label{sec:introduction}

\noindent Control software in safety-critical cyber-physical systems (CPS) is often designed and verified based on platform models that do not fully capture the complexity of its deployment settings. For example, it is common to assume that the processor always operates at full speed, is dedicated to the control software, and that overheads are negligible. In practice, the hardware resources -- such as last-level shared cache (LLC), memory bandwidth and processor cycles -- often vary with time and depend on the current hardware state, which is a reason why we observe different execution times across different runs of the same control software~\cite{Bernat2002}. This gap can lead to overly costly or unsafe design.

Measurement-based approaches and overhead-aware analysis can  reduce the analysis pessimism or ensure safety~\cite{Mingsong2009}. The recent work~\cite{gifford2021dna} uses fine-grained profiles of the software execution on an actual platform to make dynamic scheduling and resource allocations. Supervisory algorithms that dynamically switch among a bank of controllers -- all provably safe but some computationally more benign (and less performant) than others -- depending on the resource availability also exist~\cite{Zhang2016}. However, the effectiveness of these techniques is contingent on the quality of prediction of the hardware resource availability at a future instance or time horizon of interest.

\if 0
\noindent Control software in safety-critical cyber-physical systems (CPS) are often designed and verified in hardware environments that are different from those they are deployed in. This causes under and over utilization of the time-varying hardware resources, such as last-level shared cache (LLC), memory bandwidth and processor cycles, which leads to increased hardware cost or poor control performance.

One promising approach is to design a supervisory algorithm to dynamically switch among a bank of controllers--all provably safe but some computationally more benign (and less performant) than others--depending on the available resources. However, the effectiveness of switching remains contingent on the quality of prediction for hardware resource availability at a future instance or time horizon of interest.   
\fi 

In this work, we propose to predict the resource usage by control software based on just a very small set of measurements. This approach is attractive as it can reduce measurement efforts while better handling potential variances.

A first-principle predictive model for hardware resource availability based on semiconductor physics of the specific platform is, however, unavailable. Furthermore, resources such as cache and bandwidth are not only time-varying and stochastic, but they are also statistically correlated. This makes it challenging to predict the \emph{joint stochastic variability} of the hardware resource availability in general. The challenge is even more pronounced for control software because the computational burden then also depends on additional context, e.g., reference trajectory that the controller is tracking. 

We note that for safety-critical CPS, predicting the joint stochastic hardware resource state, as opposed to predicting a lumped variable such as worst-case execution time, can open the door for designing a new class of dynamic scheduling algorithms with better performance than what is feasible today while minimizing hardware cost.

This work proposes learning a joint stochastic process for hardware resource availability from control software execution profiles conditioned on CPS contexts (to be made precise in Sec. \ref{subsec:Contextc}, \ref{subsec:HardwareResourceStatexi}). Our proposed method leverages recent advances in \emph{stochastic control} -- specifically in the multimarginal Schr\"{o}dinger bridge (MSBP) -- to allow prediction of time-varying joint statistical distributions of hardware resource availability at any desired time.




\subsubsection*{Contributions} Our specific contributions are as follows.
\begin{itemize}

\item We show how recent algorithmic developments in solving the MSBP, enable probabilistic learning of hardware resources. This advances the state-of-the-art at the intersection of control, learning and real-time systems.

\item The proposed method is statistically nonparametric, and is suitable for high-dimensional joint prediction since it avoids gridding the hardware feature/state space.

\item The proposed formulation provably predicts the most likely distribution given a sequence of distributional snapshots for the hardware resource state. 

\item We explain that the resulting algorithm is an instance of the multimarginal Sinkhorn iteration with path structured cost that is guaranteed to converge to a unique solution, and enjoys linear rate of convergence. Its computational complexity scales linearly w.r.t. dimensions, linearly w.r.t. number of distributional snapshots, and quadratically w.r.t. number of scattered samples.  

\end{itemize}

\subsubsection*{Organization} We introduce the notations and preliminaries in Sec. \ref{secNotationsAndPrelim}. The problem formulation is detailed in Sec. \ref{sec:formulation}. Sec. \ref{sec:OverallAlgorithm} summarizes the overall methodology, which is then followed by a numerical case study in Sec. \ref{sec:numerical}. Concluding remarks are provided in Sec. \ref{sec:conclusions}.


\section{Notations and Preliminaries}\label{secNotationsAndPrelim} 
\noindent We use unboldfaced capital letters to denote matrices and bold capital letters to denote tensors (of order three or more). Unboldfaced (resp. boldfaced) small letters are used to denote scalars (resp. vectors). Capital calligraphic letters are reserved to denote sets.

Square braces are used to denote the components. For instance, $\left[\bm{X}_{i_{1},\hdots,i_{r}}\right]$ denotes the $(i_1,\hdots,i_{r})$th component of the order $r$ tensor $\bm{X}$, where $(i_1,\hdots,i_{r})\in\mathbb{N}^{r}$. We use the $r$ fold tensor product space notation $\left(\mathbb{R}^{d}\right)^{\otimes r} := \underbrace{\mathbb{R}^{d} \otimes \hdots \otimes\mathbb{R}^{d}}_{r\;\text{times}}$. 

For two tensors $\bm{X},\bm{Y}$ of order $r$, we define their \emph{Hilbert-Schmidt inner product} as
\begin{align}
\langle\bm{X},\bm{Y}\rangle := \sum_{i_1,\hdots,i_r}\left[\bm{X}_{i_{1},\hdots,i_{r}}\right]\left[\bm{Y}_{i_{1},\hdots,i_{r}}\right].
\label{HilbertSchmidtInnerProdTensors}
\end{align}

The operators $\exp(\cdot)$ and $\log(\cdot)$ are understood elementwise. We use $\odot$ and $\oslash$ to denote elementwise (Hadamard) multiplication and division, respectively. 

For measures $\mu,\nu$ defined on two Polish spaces, their product measure is denoted by $\mu \otimes \nu$. The \emph{relative entropy} a.k.a. \emph{Kullback-Leibler divergence} $D_{\rm{KL}}(\cdot\Vert\cdot)$ between probability measures $\mu$ and $\nu$ is
\begin{align}
D_{\rm{KL}}(\mu\Vert\nu):=\begin{cases}
\int \log\frac{\differential\mu}{\differential\nu}\differential\mu & \text{if}\quad\mu \ll \nu,\\
+\infty &\text{otherwise,}
\end{cases}
\label{DefKLdivergence}
\end{align}
where $\frac{\differential\mu}{\differential\nu}$ denotes the Radon-Nikodym derivative, and $\mu \ll \nu$ is a shorthand for ``$\mu$ is absolutely continuous w.r.t. $\nu$".

The Hilbert (projective) metric (see e.g., \cite{bushell1973hilbert}) $d_{\rm{H}}\left(\bm{u},\bm{v}\right)$ between two vectors $\bm{u},\bm{v}\in\mathbb{R}^{n}_{>0}$ is
\begin{align}
d_{\rm{H}}\left(\bm{u},\bm{v}\right) = \log\left(\dfrac{\max_{i=1,\hdots,n}u_i/v_i}{\min_{i=1,\hdots,n}u_i/v_i}\right).
\label{HilbertMetric}
\end{align}
We use the term ``control cycle" to mean one pass of a feedback control loop. Due to hardware stochasticity, each control cycle completion takes variable amount of time.


\section{Problem Formulation}\label{sec:formulation}
\subsection{Context $\bm{c}$}\label{subsec:Contextc}
\noindent 
We consider a context vector $\bm{c}$ comprised of separable cyber and physical context vectors
\begin{align}
\bm{c} := \begin{pmatrix}
\bm{c}_{\rm{cyber}}\\
\bm{c}_{\rm{phys}}
\end{pmatrix}.
\label{defContextVec} 
\end{align}
In this work, we consider an instance of \eqref{defContextVec} where 
\begin{align}
\bm{c}_{\rm{cyber}} = \begin{pmatrix}
{\text{allocated last-level cache}}\\
{\text{allocated memory bandwidth}}
\end{pmatrix},
\label{defcCyber}
\end{align}
where both features are allocated in blocks of some size, and
\begin{align}
\bm{c}_{\rm{phys}} = y_{\rm{des}}(x)\in {\rm{GP}}\left([x_{\min},x_{\max}]\right),
\label{defcPhysical}
\end{align}
where ${\rm{GP}}$ denotes a Gaussian process over the domain $[x_{\min},x_{\max}]$. We work with a collection of contexts with cardinality $n_{\rm{context}}$, i.e., a sample of contexts $\{\bm{c}^{i}\}_{i=1}^{n_{\rm{context}}}$. 

\subsection{Hardware Resource State $\bm{\xi}$}\label{subsec:HardwareResourceStatexi}
\noindent 
For concreteness, we define a hardware resource state or feature vector used in our numerical case study (Sec. \ref{sec:numerical}):
\begin{align}
\bm{\xi}:=\begin{pmatrix}
\xi_1\\
\xi_2\\
\xi_3
\end{pmatrix}=
\begin{pmatrix}
\text{instructions retired}\\
\text{LLC requests}\\
\text{LLC misses}
\end{pmatrix}.
\label{defFeatureVec} 
\end{align}
The three elements of $\bm{\xi}$ denote the number of CPU instructions, the number of LLC requests, and the number of LLC misses in the last time unit (10 ms in our profiling), respectively. 

We emphasize that our proposed method is not limited by what specific components comprise $\bm{\xi}$. To highlight this flexibility, we describe the proposed approach for $\bm{\xi}\in\mathbb{R}^{d}$ with suitable interpretations for the specific application. 

For a time interval $[0,t]$ of interest, we think of time-varying $\bm{\xi}$ as a continuous time vector-valued stochastic process over subsets of $\mathbb{R}^{d}$. Suppose that $s\in\mathbb{N}, s\geq 2$ snapshots or observations are made for the stochastic state $\bm{\xi}(\tau)$, $0\leq \tau\leq t$, at (possibly non-equispaced) instances
$$\tau_{1}\equiv 0 < \tau_2 < \hdots < \tau_{s-1} < \tau_{s}\equiv t.$$ 
Consider the snapshot index set $\llbracket s \rrbracket := \{1,\hdots,s\}$.
For a fixed context $\bm{c}$, the snapshot observations comprise a sequence of joint probability measures $\{\mu_{\sigma}\}_{\sigma\in\llbracket s\rrbracket}$ satisfying $\int\differential\mu_{\sigma}(\bm{\xi}(\tau_{\sigma})) = 1$. In other words, 
\begin{align}
\bm{\xi}(\tau_{\sigma})\sim\mu_{\sigma} \quad \forall\sigma\in\llbracket s\rrbracket.
\label{FeatureDistribution}
\end{align}
In our application, the data $\{\mu_{\sigma}\}_{\sigma\in\llbracket s\rrbracket}$ comes from control software execution profiles, i.e., by executing the same control software for the same $\bm{c}$ with all parameters and initial conditions fixed. So the stochasticity in $\bm{\xi}(\tau_{\sigma})$ stems from the dynamic variability in hardware resource availability.  

In particular, for finitely many (say $n$) execution profiles, we consider empirical distributions
\begin{align}
\mu_{\sigma} := \dfrac{1}{n}\sum_{i=1}^{n}\delta(\bm{\xi}-\bm{\xi}^{i}(\tau_{\sigma})),
\label{EmpiricalMeasures}
\end{align}
where $\delta(\bm{\xi}-\bm{\xi}^{i}(\tau_{\sigma}))$ denotes the Dirac delta at sample location $\bm{\xi}^{i}(\tau_{\sigma})$ where $i\in\llbracket n\rrbracket$, $\sigma\in\llbracket s\rrbracket$. At any snapshot index $\sigma\in\llbracket s\rrbracket$, the set $\{\bm{\xi}^{i}(\tau_{\sigma})\}_{i=1}^{n}$ is \emph{scattered data}.

Given the data \eqref{FeatureDistribution}-\eqref{EmpiricalMeasures}, we would like to predict the \emph{most likely} hardware resource state statistics
\begin{align}
\bm{\xi}(\tau) \sim \mu_{\tau} \quad\text{for any}\;\tau\in[0,t].
\label{xisim}
\end{align}
Without the qualifier ``most likely", the problem is overdetermined since there are uncountably many \emph{measure-valued continuous curves} over $[0,t]$ that are consistent with the observed data \eqref{FeatureDistribution}-\eqref{EmpiricalMeasures}.

\begin{figure}[t]
    \centering    \includegraphics[width=0.99\linewidth]{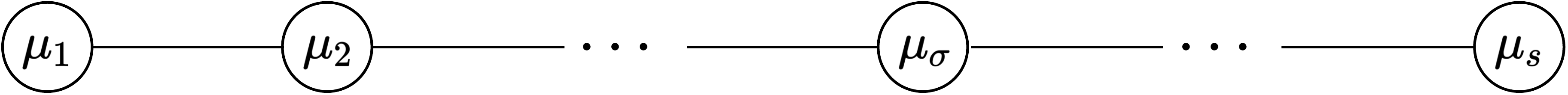}
    \caption{{\small{The path tree for sequentially observed $\{\mu_{\sigma}\}_{\sigma\in\llbracket s\rrbracket}$.}}}
\vspace*{-0.2in}
\label{FigPathTree}
\end{figure}

\subsection{Multimarginal Schr\"{o}dinger Bridge}\label{subsecMSBP}
\noindent
Let $\mathcal{X}_{\sigma} :={\rm{support}}\left(\mu_{\sigma}\right)\subseteq\mathbb{R}^{d}$ $\forall\sigma\in\llbracket s\rrbracket$, and consider the Cartesian product $\mathcal{X}_{1}\times\mathcal{X}_{2}\times \hdots \times \mathcal{X}_{s}=:\boldsymbol{\mathcal{X}}\subseteq\left(\mathbb{R}^{d}\right)^{\otimes s}$. Let $\mathcal{M}\left(\mathcal{X}_{\sigma}\right)$ and $\mathcal{M}\left(\boldsymbol{\mathcal{X}}\right)$ denote the collection (i.e., manifold) of probability measures on $\mathcal{X}_{\sigma}$ and $\boldsymbol{\mathcal{X}}$, respectively. Define a ground cost $\bm{C}:\bm{\mathcal{X}}\mapsto\mathbb{R}_{\geq 0}$.

Following \cite[Sec. 3]{elvander2020multi}, let
{\small{\begin{subequations}
\begin{align}
\differential\bm{\xi}_{-\sigma}&:=\differential\bm{\xi}(\tau_{1})\times\hdots\times\differential\bm{\xi}(\tau_{\sigma-1})\times\differential\bm{\xi}(\tau_{\sigma+1})\times\hdots\times\differential\bm{\xi}(\tau_{s}),\\
\bm{\mathcal{X}}_{-\sigma}&:=\mathcal{X}_{1}\times\hdots\times\mathcal{X}_{\sigma -1}\times\mathcal{X}_{\sigma + 1}\times \hdots \times \mathcal{X}_{s}.    
\end{align}
\label{DefxiX}
\end{subequations}}} 
For $\varepsilon\geq 0$ (not necessarily small), the multimarginal Schr\"{o}dinger bridge problem (MSBP) is the following infinite dimensional convex program:
{\small{\begin{subequations}
\begin{align}
&\underset{\bm{M}\in \mathcal{M}\left(\bm{\mathcal{X}}\right)}{\min} \!\int_{\bm{\mathcal{X}}}\!\big\{\bm{C}\!\left(\bm{\xi}(\tau_{1}),\hdots,\bm{\xi}(\tau_{s})\right)+ \varepsilon\log\bm{M}\!\left(\bm{\xi}(\tau_{1}),\hdots,\bm{\xi}(\tau_{s})\right)\!\!\big\}\nonumber\\
&\qquad\qquad\qquad\bm{M}\!\left(\bm{\xi}(\tau_{1}),\hdots,\bm{\xi}(\tau_{s})\right)\differential\bm{\xi}(\tau_{1}) \hdots\differential\bm{\xi}(\tau_{s})\label{MSBPobj}\\
&\text{subject to}\int_{\bm{\mathcal{X}}_{-\sigma}}\!\!\!\!\!\!\bm{M}\!\left(\bm{\xi}(\tau_{1}), \hdots,\bm{\xi}(\tau_{s})\right)\differential\bm{\xi}_{-\sigma} = \mu_{\sigma}\,\forall\sigma\in\llbracket s\rrbracket.\label{MSBPconstr}
\end{align}
\label{MSBP}
\end{subequations}}}
In particular, $\mathcal{M}(\bm{\mathcal{X}})$ is a convex set. The objective \eqref{MSBPobj} is strictly convex in $\bm{M}$, thanks to the $\varepsilon$-regularized negative entropy term $\int_{\bm{\mathcal{X}}}\varepsilon \bm{M}\log\bm{M}$. The constraints \eqref{MSBPconstr} are linear.

In this work, the measures $\{\mu_{\sigma}\}_{\sigma\in\llbracket s\rrbracket}$ correspond to sequential observation, and we therefore fix the \emph{path structured} (Fig. \ref{FigPathTree}) ground cost
\begin{align}
\bm{C}\!\left(\bm{\xi}(\tau_{1}),\hdots,\bm{\xi}(\tau_{s})\right) = \sum_{\sigma=1}^{s-1}c_{\sigma}\left(\bm{\xi}(\tau_{\sigma}),\bm{\xi}(\tau_{\sigma+1})\right). 
\label{GroundCost}
\end{align}
In particular, we choose the squared Euclidean distance sequential cost between two consecutive snapshot indices, i.e., $c_{\sigma}(\cdot,\cdot):=\|\cdot - \cdot\|_2^2$ $\forall\sigma\in\llbracket s\rrbracket$. MSBPs with more general \emph{tree structured} ground costs have appeared in \cite{haasler2021multimarginal}. 

When the cardinality of the index set $\llbracket s\rrbracket$ equals $2$, then \eqref{MSBP} reduces to the (bi-marginal) Schr\"{o}dinger bridge problem (SBP) \cite{leonard2014survey,chen2021stochastic}. In this case, the solution of \eqref{MSBP} gives the \emph{most likely evolution} between two marginal snapshots $\mu_1,\mu_2$. This can be established via the large deviations \cite{dembo2009large} interpretation \cite[Sec. II]{follmer1988random} of SBP using Sanov's theorem \cite{sanov1958probability}; see also \cite[Sec. 2.1]{pavon2021data}. 

Specifically, let $\mathcal{C}\left([\tau_1,\tau_2],\mathbb{R}^{d}\right)$ denote the collection of continuous functions on the time interval $[\tau_1,\tau_2]$ taking values in $\mathbb{R}^{d}$. Let $\Pi(\mu_1,\mu_2)$ be the collection of all path measures on $\mathcal{C}\left([\tau_1,\tau_2],\mathbb{R}^{d}\right)$ with time $\tau_1$ marginal $\mu_1$, and time $\tau_2$ marginal $\mu_2$. Given a symmetric ground cost (e.g., Euclidean distance) $C:\mathcal{X}_1\times\mathcal{X}_2\mapsto\mathbb{R}_{\geq 0}$, let 
\begin{align}
K(\cdot,\cdot):=\exp\left(-\dfrac{C(\cdot,\cdot)}{\varepsilon}\right),
\label{defkBi}
\end{align} 
and consider the \emph{bimarginal Gibbs kernel} 
\begin{align}
K\left(\bm{\xi}(\tau_1),\bm{\xi}(\tau_2)\right)\mu_1\otimes\mu_2.
\label{DefGibbsKernelBi} 
\end{align}
Then, the bimarginal SBP solves  
\begin{align}
\underset{\pi\in\Pi(\mu_1,\mu_2)}{\min}\varepsilon D_{\rm{KL}}\left(\pi\Vert K\left(\bm{\xi}(\tau_1),\bm{\xi}(\tau_2)\right)\mu_1\otimes\mu_2\right),
\label{BimarginalKL}
\end{align}
i.e., the most likely evolution of the path measure consistent with the observed measure-valued snapshots $\mu_1,\mu_2$.

Under the stated assumptions on the ground cost $c$, the existence of minimizer for \eqref{BimarginalKL} is guaranteed \cite{csiszar1975divergence,borwein1994entropy}. The uniqueness of minimizer follows from strict convexity of the map $\pi \mapsto D_{\rm{KL}}(\pi\Vert\nu)$ for fixed $\nu$.

This relative entropy reformulation, and thereby ``the most likely evolution consistent with observed measures" interpretation, also holds for the MSBP \eqref{MSBP} with $s\geq 2$ snapshots. Specifically, for $\bm{C}:\bm{\mathcal{X}}\mapsto\mathbb{R}_{\geq 0}$ as in \eqref{MSBP}-\eqref{GroundCost}, we generalize \eqref{defkBi} as
\begin{align}
\bm{K}\!\left(\!\bm{\xi}(\tau_1),\hdots,\bm{\xi}(\tau_s)\!\right) :=\exp\!\left(\!\!-\dfrac{\bm{C}\!\left(\bm{\xi}(\tau_1),\hdots,\bm{\xi}(\tau_s)\!\right)}{\varepsilon}\!\!\right),
\end{align}
and define the \emph{multimarginal Gibbs kernel}
\begin{align}
\bm{K}\left(\bm{\xi}(\tau_1),\hdots,\bm{\xi}(\tau_s)\right)\!\mu_1\otimes\hdots\otimes\mu_s.
\label{DefGibbsKernelMulti} 
\end{align}
Problem \eqref{BimarginalKL} then generalizes to
\begin{align}
\underset{\pi\in\Pi(\mu_1,\hdots,\mu_s)}{\min}\varepsilon D_{\rm{KL}}\left(\pi\Vert \bm{K}\left(\bm{\xi}(\tau_1),\hdots,\bm{\xi}(\tau_s)\right)\mu_1\otimes\hdots\otimes\mu_s\right)
\label{MultimarginalKL}
\end{align}
where $\Pi(\mu_1,\hdots,\mu_s)$ denotes the collection of all path measures on $\mathcal{C}\left([\tau_1,\tau_s],\mathbb{R}^{d}\right)$ with time $\tau_{\sigma}$ marginal $\mu_\sigma$ $\forall\sigma\in\llbracket s\rrbracket$. The equivalence between \eqref{MSBP} and \eqref{MultimarginalKL} can be verified by direct computation. Thus solving \eqref{MultimarginalKL}, or equivalently \eqref{MSBP}, yields the most likely evolution of the path measure consistent with the observed measure-valued snapshots $\mu_\sigma$ $\forall\sigma\in\llbracket s\rrbracket$.

We propose to solve the MSBP \eqref{MSBP} for learning the time-varying statistics of the hardware resource state $\bm{\xi}$ as in \eqref{xisim}. We next detail a discrete formulation to numerically solve the same for scattered data $\{\bm{\xi}^{i}(\tau_{\sigma})\}_{i=1}^{n}$ where $n$ is the number of control software execution profiles.

The minimizer of \eqref{MSBP}, $\bm{M}_{\rm{opt}}\left(\bm{\xi}(\tau_{1}), \hdots,\bm{\xi}(\tau_{s})\right)$ can be used to compute the optimal coupling between snapshot index pairs $(\sigma_1,\sigma_2)\in\{\llbracket s\rrbracket^{\otimes 2}\mid\sigma_1 < \sigma_2\}$ as
\begin{align}
\int_{\bm{\mathcal{X}}_{-\sigma_{1},-\sigma_{2}}}\!\!\!\!\!\!\bm{M}_{\rm{opt}}\!\left(\bm{\xi}(\tau_{1}), \hdots,\bm{\xi}(\tau_{s})\right)\differential\bm{\xi}_{-\sigma_{1},-\sigma_{2}}
\label{bimarginalProj}
\end{align}
where
{\small{\begin{subequations}
\begin{align}
\differential\bm{\xi}_{-\sigma_{1},-\sigma_{2}}&:=\prod_{\sigma\in\llbracket s\rrbracket\setminus\{\sigma_1,\sigma_2\}}\differential\bm{\xi}(\tau_{\sigma}),\\
\bm{\mathcal{X}}_{-\sigma_{1},-\sigma_{2}}&:=\prod_{\sigma\in\llbracket s\rrbracket\setminus\{\sigma_1,\sigma_2\}} \mathcal{X}_{\sigma}.    
\end{align}
\label{DefxiXDouble}
\end{subequations}}}
This will be useful for predicting the statistics of $\bm{\xi}(\tau)\sim\mu_{\tau}$ at any (out-of-sample) query time $\tau\in[0,t]$. 

\begin{remark}\label{RemarkMOT}
(\textbf{MSBP and MOT}) When the entropic regularization strength $\varepsilon = 0$, then the MSBP \eqref{MSBP} reduces to the multimarginal optimal transport (MOT) problem \cite{ruschendorf2002n,pass2015multi} that has found widespread applications in barycenter computation \cite{agueh2011barycenters}, fluid dynamics \cite{brenier2008generalized,benamou2019generalized}, team matching problems \cite{carlier2015numerical}, and density functional theory \cite{buttazzo2012optimal,cotar2013density}. Further specializing MOT with the cardinality of $\llbracket s\rrbracket$ equals $2$, yields the (bimarginal) optimal transport \cite{peyre2019computational} problem. 
\end{remark}


\subsection{Discrete Formulation of MSBP}\label{subsecDiscreteFormulation}
\noindent
For finite scattered data $\{\bm{\xi}^{i}(\tau_{\sigma})\}_{i=1}^{n}$ and $\{\mu_{\sigma}\}_{\sigma\in\llbracket s\rrbracket}$ as in \eqref{EmpiricalMeasures}, we set up a discrete version of \eqref{MSBP} as follows. 

With slight abuse of notations, we use the same symbol for the continuum and discrete version of a tensor. The ground cost in discrete formulation is represented by an order $s$ tensor $\bm{C}\in\left(\mathbb{R}^{n}\right)^{\otimes s}_{\geq 0}$, with components
$\left[\bm{C}_{i_{1},\hdots,i_{s}}\right] = \bm{C}\left(\bm{\xi}_{i_1},\hdots,\bm{\xi}_{i_{s}}\right)$. The component $\left[\bm{C}_{i_{1},\hdots,i_{s}}\right]$ encodes the cost of transporting unit mass for a tuple $(i_1,\hdots,i_s)$.

Likewise, consider the discrete mass tensor $\bm{M}\in\left(\mathbb{R}^{n}\right)^{\otimes s}_{\geq 0}$ with components $\left[\bm{M}_{i_{1},\hdots,i_{s}}\right] = \bm{M}\left(\bm{\xi}_{i_1},\hdots,\bm{\xi}_{i_{s}}\right)$. The component $\left[\bm{M}_{i_{1},\hdots,i_{s}}\right]$ denotes the amount of transported mass for a tuple $(i_1,\hdots,i_s)$.

For any $\sigma\in\llbracket s\rrbracket$, the empirical marginals $\bm{\mu}_{\sigma}\in\mathbb{R}^{n}_{\geq 0}$ are supported on the finite sets $\{\bm{\xi}^{i}(\tau_{\sigma})\}_{i=1}^{n}$. We denote the projection of $\bm{M}\in\left(\mathbb{R}^{n}\right)^{\otimes s}_{\geq 0}$ on the $\sigma$th marginal as $\proj_{\sigma}(\bm{M})$. Thus $\proj_{\sigma}:\left(\mathbb{R}^{n}\right)^{\otimes s}_{\geq 0}\mapsto \mathbb{R}^{n}_{\geq 0}$, and is given componentwise as
\begin{align}
\left[{\rm{proj}}_\sigma \!\left(\bm{M}\right)_{j}\!\right] \!\!=\!\! \!\!\sum_{i_1,\hdots,i_{\sigma-1},i_{\sigma + 1},\hdots, i_s}\!\!\!\!\!\!\!\!\bm{M}_{i_1,\hdots,i_{\sigma-1},j,i_{\sigma + 1},\hdots, i_s}.
\label{DefprojsigmaComponent}
\end{align}
Likewise, denote the projection of $\bm{M}\in\left(\mathbb{R}^{n}\right)^{\otimes s}_{\geq 0}$ on the $(\sigma_1,\sigma_2)$th marginal as $\proj_{\sigma_1,\sigma_2}(\bm{M})$, i.e., $\proj_{\sigma_1,\sigma_2}:\left(\mathbb{R}^{n}\right)^{\otimes s}_{\geq 0}\mapsto \mathbb{R}^{n\times n}_{\geq 0}$, and is given componentwise as
\begin{align}
&\left[{\rm{proj}}_{\sigma_{1},\sigma_{2}} \!\left(\bm{M}\right)_{j,\ell}\!\right]\nonumber\\
&=\!\!\!\!\sum_{i_{\sigma}\mid\sigma\in\llbracket s\rrbracket\setminus\{\sigma_1,\sigma_2\}}\!\!\!\!\!\!\!\!\bm{M}_{i_1,\hdots,i_{\sigma_{1}-1},j,i_{\sigma_{1} + 1},\hdots,i_{\sigma_{2}-1},\ell,i_{\sigma_{2} + 1},\hdots,i_s}.
\label{DefprojsigmaComponentDouble}
\end{align}
We note that \eqref{DefprojsigmaComponent} and \eqref{DefprojsigmaComponentDouble} are the discrete versions of the integrals in \eqref{MSBPconstr} and \eqref{bimarginalProj}, respectively.

With the above notations in place, the discrete version of \eqref{MSBP} becomes
\begin{subequations}
\begin{align}
&\underset{\bm{M}\in\left(\mathbb{R}^{n}\right)^{\otimes s}_{\geq 0}}{\min}~\langle\bm{C}+\varepsilon\log\bm{M},\bm{M}\rangle\label{DiscreteMSBPobj}\\
&\text{subject to}~~{\rm{proj}}_{\sigma}\left(\bm{M}\right) = \bm{\mu}_{\sigma}\quad\forall\sigma\in\llbracket s\rrbracket.\label{DiscereteMSBPconstr}
\end{align}
\label{DiscreteMSBP}
\end{subequations}
The primal formulation \eqref{DiscreteMSBP} has $n^{s}$ decision variables, and is computationally intractable. Recall that even for the bimarginal ($s=2$) case, a standard approach \cite{cuturi2013sinkhorn} is to use Lagrange duality to notice that the optimal mass matrix $M_{\rm{opt}}$ is a diagonal scaling of $K:=\exp(-C/\varepsilon)\in\mathbb{R}^{n\times n}_{> 0}$, i.e., $M_{\rm{opt}}=\diag(\bm{u}_1) K\diag(\bm{u}_2)$ where $\bm{u}_1 := \exp(\bm{\lambda}_1/\varepsilon)$, $\bm{u}_2 := \exp(\bm{\lambda}_2/\varepsilon)$, and $\bm{\lambda}_1,\bm{\lambda}_2\in\mathbb{R}^{n}$ are the Lagrange multipliers associated with respective bimarginal constraints $\proj_{1}(M)=\bm{\mu}_1$, $\proj_{2}(M)=\bm{\mu}_2$. The unknowns $\bm{u}_1,\bm{u}_2$ can be obtained by performing the Sinkhorn iterations
\begin{subequations}
\begin{align}
\bm{u}_1 &\leftarrow \bm{\mu}_1\oslash \left(K\bm{u}_2\right),\\
\bm{u}_2 &\leftarrow \bm{\mu}_2\oslash \left(K^{\top}\bm{u}_1\right),
\end{align}
\label{BimarginalSink}
\end{subequations}
with guaranteed linear convergence \cite{franklin1989scaling} wherein the computational cost is governed by two matrix-vector multiplications.

The duality result holds for the multimarginal ($s\geq 2$) case. Specifically, the optimal mass tensor in \eqref{DiscreteMSBP} admits a structure $\bm{M}_{\rm{opt}} = \bm{K}\odot\bm{U}$ where $\bm{K}:=\exp(-\bm{C}/\varepsilon)\in\left(\mathbb{R}^{n}\right)^{\otimes s}_{> 0}$, $\bm{U}:=\otimes_{\sigma=1}^{s}\bm{u}_{\sigma}\in\left(\mathbb{R}^{n}\right)^{\otimes s}_{> 0}$, $\bm{u}_{\sigma}:=\exp(\bm{\lambda}_{\sigma}/\varepsilon)$, and $\bm{\lambda}_{\sigma}\in\mathbb{R}^{n}$ are the Lagrange multipliers associated with the respective multimarginal constraints \eqref{DiscereteMSBPconstr}. The unknowns $\bm{u}_{\sigma}$ can, in principle, be obtained from the multimarginal Sinkhorn iterations \cite{benamou2015iterative}
\begin{align}
\bm{u}_{\sigma} \leftarrow \bm{u}_{\sigma} \odot \bm{\mu}_{\sigma}\oslash{\rm{proj}}_{\sigma}\left(\bm{K}\odot\bm{U}\right) ~\forall\sigma\in\llbracket s\rrbracket,
\label{MultimarginalSink}    
\end{align}
which generalize \eqref{BimarginalSink}. However, computing ${\rm{proj}}_{\sigma}\left(\bm{K}\odot\bm{U}\right)$ requires $\mathcal{O}\left(n^{s}\right)$ operations. Before describing how to avoid this exponential complexity (Sec. \ref{subsec:multiSink}), we point out the convergence guarantees for \eqref{MultimarginalSink}.

\subsection{Convergence for Multimarginal Sinkhorn Iterations}\label{subsec:ConvergenceMultiSink}
\noindent
The iterations \eqref{MultimarginalSink} can either be derived as alternating Bregman projections \cite{benamou2015iterative} or via block coordinate dual ascent \cite{elvander2020multi}. Following either viewpoints leads to guaranteed linear convergence of \eqref{MultimarginalSink}; see \cite{bauschke2000dykstras}, \cite[Thm. 3.5]{haasler2021multimarginal}.

More recent works have also established \cite{marino2020optimal} guaranteed convergence for the continuous formulation \eqref{MSBP} with linear rate of convergence \cite{carlier2022linear}.

\subsection{Multimarginal Sinkhorn Iterations for Path Structured $\bm{C}$}\label{subsec:multiSink}
\noindent
We circumvent the exponential complexity in computing ${\rm{proj}}_{\sigma}\left(\bm{K}\odot\bm{U}\right)$ in \eqref{MultimarginalSink} by leveraging the path structured ground cost \eqref{GroundCost}. This is enabled by a key result from \cite{elvander2020multi}, rephrased, and reproved below in slightly generalized form. 
\begin{proposition}\label{PropMultiSinkPathCost}(\cite[Prop. 2]{elvander2020multi})
Consider the discrete ground cost tensor $\bm{C}$ in \eqref{DiscreteMSBP} induced by a path structured cost \eqref{GroundCost} so that $\left[\bm{C}_{i_{1},\hdots,i_{s}}\right] = \sum_{\sigma=1}^{s-1}\left[C_{i_{\sigma},i_{\sigma+1}}^{\sigma\rightarrow\sigma+1}\right]$ where the matrix $C^{\sigma\rightarrow\sigma+1}\in\mathbb{R}^{n\times n}_{\geq 0}$ encodes the cost of transporting unit mass between each source-destination pair from the source set $\{\bm{\xi}^{i}(\tau_{\sigma})\}_{i=1}^{n}$ to the destination set $\{\bm{\xi}^{i}(\tau_{\sigma+1})\}_{i=1}^{n}$.

Let $K^{\sigma\rightarrow\sigma+1}:=\exp(-C^{\sigma\rightarrow\sigma+1}/\varepsilon)\in\mathbb{R}^{n\times n}_{\geq 0}$, $\bm{K}:=\exp(-\bm{C}/\varepsilon)\in\left(\mathbb{R}^{n}\right)^{\otimes s}_{> 0}$, $\bm{U}:=\otimes_{\sigma=1}^{s}\bm{u}_{\sigma}\in\left(\mathbb{R}^{n}\right)^{\otimes s}_{> 0}$.

Then \eqref{DefprojsigmaComponent} and \eqref{DefprojsigmaComponentDouble} can be expressed as
\begin{align}
&{\rm{proj}}_{\sigma}\!\left(\bm{K}\odot\bm{U}\right)\!=\!\!\left(\!\bm{u}_1^{\!\top} K^{1\to2}\!\prod_{j=2}^{\sigma-1}\!\diag(\bm{u}_j)K^{j\to j+1}\!\!\right)^{\!\!\top}\!\!\!\!\odot\bm{u}_{\sigma}\odot \nonumber\\
&\left(\!\!\left(\prod_{j=\sigma+1}^{s-1}K^{j-1\to j}\diag(\bm{u}_j)\!\right)\!\!K^{s-1\to s}\bm{u}_s\!\right)\,\forall\sigma\in\llbracket s\rrbracket,\label{ProjSimplified}
\end{align}
and
\begin{align}
&{\rm{proj}}_{\sigma_1,\sigma_2}\!\left(\bm{K}\odot\bm{U}\right)=\diag\!\left(\!\!\bm{u}_1^\top K^{1\to2}\prod_{j=2}^{\sigma_1-1}\diag(\bm{u}_j)K^{j\to j+1}\!\!\right)\nonumber\\
&\qquad\qquad\qquad\qquad~~~\diag(\bm{u}_{\sigma_1})\!\!\prod_{j=\sigma_1+1}^{\sigma_2}\!\!\left(K^{j-1\to j}\diag(\bm{u}_j)\right)\nonumber\\
&\qquad\qquad\qquad~\diag\!\left(\!\!\left(\prod_{j=\sigma_2+1}^{s-1}K^{j-1\to j}\diag(\bm{u}_j)\!\!\right)\!K^{s-1\to s}\bm{u}_s\!\!\right)\nonumber\\
&\qquad\qquad\qquad\qquad\forall(\sigma_1,\sigma_2)\in\{\llbracket s\rrbracket^{\otimes 2}\mid\sigma_1 < \sigma_2\}.
\label{Proj2Simplified}
\end{align}
\end{proposition}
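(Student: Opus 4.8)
The plan is to exploit the additive separability of the path-structured cost, which turns the exponential into a product of matrix kernels, and then to rearrange the multimarginal projection sums into nested matrix--vector products. First I would write out the entrywise factorization that follows directly from $\left[\bm{C}_{i_{1},\hdots,i_{s}}\right] = \sum_{\sigma=1}^{s-1}\left[C_{i_{\sigma},i_{\sigma+1}}^{\sigma\rightarrow\sigma+1}\right]$: since $\exp$ and $\log$ act elementwise and $\bm{U}=\otimes_{\sigma}\bm{u}_\sigma$ has entries $\prod_\sigma [\bm{u}_\sigma]_{i_\sigma}$, the tensor $\bm{K}\odot\bm{U}$ has the rank-one-per-edge form $\left[(\bm{K}\odot\bm{U})_{i_1,\hdots,i_s}\right] = \prod_{\sigma=1}^{s}[\bm{u}_\sigma]_{i_\sigma}\;\prod_{\sigma=1}^{s-1}\left[K^{\sigma\to\sigma+1}_{i_\sigma,i_{\sigma+1}}\right]$. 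This is the structural observation that makes everything work, and it is where I would start the proof.

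Next I would substitute this product form into the definition \eqref{DefprojsigmaComponent} of $\proj_\sigma$ and distribute the summation over the free indices $i_1,\hdots,i_{\sigma-1},i_{\sigma+1},\hdots,i_s$. The key point is that the summand is a product of factors, each of which couples only two consecutive indices (or a single index via $\bm{u}_j$), so the multi-sum factors into two independent chains: a ``left'' chain over $i_1,\hdots,i_{\sigma-1}$ and a ``right'' chain over $i_{\sigma+1},\hdots,i_s$, joined at the fixed index $j=i_\sigma$ which contributes the isolated factor $[\bm{u}_\sigma]_j$. I would then observe that the left chain sum, read from $i_1$ inward, is exactly the iterated product $\bm{u}_1^\top K^{1\to2}\diag(\bm{u}_2)K^{2\to3}\cdots\diag(\bm{u}_{\sigma-1})K^{\sigma-1\to\sigma}$ evaluated at component $j$; transposing gives the first parenthesized factor of \eqref{ProjSimplified}. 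Symmetrically, the right chain sum, read from $i_s$ inward, is $K^{\sigma\to\sigma+1}\diag(\bm{u}_{\sigma+1})\cdots K^{s-1\to s}\bm{u}_s$ at component $j$, which is the second parenthesized factor. Taking the Hadamard product of the two chains with $\bm{u}_\sigma$ reproduces \eqref{ProjSimplified}. I would handle the boundary cases $\sigma=1$ and $\sigma=s$ by the convention that an empty product is the identity, so that the corresponding chain collapses and the formula still reads correctly.

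For \eqref{Proj2Simplified} I would run the same argument with two fixed indices $j=i_{\sigma_1}$ and $\ell=i_{\sigma_2}$ held out of the summation. Now the free indices split into three blocks --- left of $\sigma_1$, strictly between $\sigma_1$ and $\sigma_2$, and right of $\sigma_2$ --- and the sum factors into three chains. The left and right chains are again row- and column-vectors in the indices $j$ and $\ell$ respectively; wrapping each in a $\diag(\cdot)$ promotes them to the diagonal matrices appearing in the outer factors of \eqref{Proj2Simplified}, while the middle chain $\diag(\bm{u}_{\sigma_1})\prod_{j=\sigma_1+1}^{\sigma_2}\left(K^{j-1\to j}\diag(\bm{u}_j)\right)$ is the genuine $n\times n$ matrix block carrying the $(j,\ell)$ dependence. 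Matching indices on both sides and again using the empty-product convention for $\sigma_1=1$ or $\sigma_2=s$ completes the identity.

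The main obstacle is purely bookkeeping rather than conceptual: one must be careful that the iterated matrix products are associated in the correct order and that the transpose in the left factor is placed correctly, since $\bm{u}_1^\top K^{1\to2}\cdots$ is naturally a row vector while $\proj_\sigma$ must return a column vector in $\mathbb{R}^n_{\geq0}$. The cleanest way to keep this straight is to prove the statement by induction on the number of marginals ``peeled'' from the left end, or equivalently to verify the identity $\sum_{i_1}[\bm{u}_1]_{i_1}[K^{1\to2}_{i_1,i_2}] = [\,(\bm{u}_1^\top K^{1\to2})^\top]_{i_2}$ as the base case and then fold in one $\diag(\bm{u}_j)K^{j\to j+1}$ factor at a time; the right chain is treated symmetrically. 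Since the left and right chains are independent once $i_\sigma$ (or $i_{\sigma_1},i_{\sigma_2}$) is fixed, no cross terms appear and the factorization is exact, which is the content of the proposition.
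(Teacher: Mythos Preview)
Your proof is correct and takes a somewhat different, more direct route than the paper's. The paper first computes the scalar Hilbert--Schmidt inner product $\langle\bm{K},\bm{U}\rangle$ and shows it equals the matrix chain $\bm{u}_1^{\top}\bigl(\prod_{j=2}^{s-1}K^{j-1\to j}\diag(\bm{u}_j)\bigr)K^{s-1\to s}\bm{u}_s$; it then invokes two lemmas from the appendix of the cited reference \cite{elvander2020multi} to pass from this scalar expression to the projections \eqref{ProjSimplified} and \eqref{Proj2Simplified} (in effect exploiting the multilinearity of $\langle\bm{K},\bm{U}\rangle$ in the $\bm{u}_\sigma$'s, so that the $\sigma$th marginal projection is recovered from the inner product as $\bm{u}_\sigma\odot\nabla_{\bm{u}_\sigma}\langle\bm{K},\bm{U}\rangle$). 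You instead work directly with the entrywise factorization of $\bm{K}\odot\bm{U}$ and split the projection sum into independent left/middle/right chains, which is more elementary and fully self-contained: the transpose placement, the empty-product boundary cases, and the chain-to-matrix-product identifications are all spelled out rather than delegated to an external lemma. The paper's route is terser and highlights the inner-product structure (useful if one later wants gradients for dual ascent), while yours makes the combinatorics of the path graph completely explicit and would stand on its own without access to \cite{elvander2020multi}.
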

\begin{proof}
The proof strategy is to write the Hilbert-Schmidt inner product $\langle\bm{K},\bm{U}\rangle$ in two different ways.

First, recall that $\bm{K}:=\exp(-\bm{C}/\varepsilon)\in\left(\mathbb{R}^{n}\right)^{\otimes s}_{> 0}$ and $\bm{U}:=\otimes_{\sigma=1}^{s}\bm{u}_{\sigma}\in\left(\mathbb{R}^{n}\right)^{\otimes s}_{> 0}$. So following \eqref{HilbertSchmidtInnerProdTensors}, we have
\begin{align*}
    \langle\bm{K},\bm{U}\rangle &= \sum_{i_1,\dots,i_s}\left(\prod_{j=2}^{s}\left[K^{j-1\to j}_{i_{j-1},i_j}\right]\right)\prod_{j=1}^s(\bm{u}_j)_{i_j} \\
    &= \sum_{i_1,\dots,i_s}(\bm{u}_1)_{i_1}\prod_{j=2}^{s}\left[K^{j-1\to j}\diag(\bm{u}_j)\right]_{i_{j-1},i_j} \\
    &= \bm{u}_1^\top \left(\prod_{j=2}^{s-1}K^{j-1\to j}\diag(\bm{u}_j)\right)K^{s-1\to s}\bm{u}_s,
\end{align*}
and \eqref{ProjSimplified} follows from \cite[Lemma 1 in Appendix 1]{elvander2020multi}. 

Next, notice that we can alternatively write
\begin{align*}
    \langle\bm{K},\bm{U}\rangle = \:&\bm{u}_1^\top \left(\prod_{j=2}^{\sigma_1-1}K^{j-1\to j}\diag(\bm{u}_j)\right)K^{\sigma_1-1\to \sigma_1}\\
    &\left(\prod_{j=\sigma_1}^{\sigma_2-1}\diag(\bm{u}_j)K^{j\to j+1}\right)\diag(\bm{u}_{\sigma_2})\\
    &\left(\prod_{j=\sigma_2+1}^{s-1}K^{j-1\to j}\diag(\bm{u}_{j})\right)K^{s-1\to s}\bm{u}_s.
\end{align*}
Then \eqref{Proj2Simplified} follows from \cite[Lemma 2 in Appendix 1]{elvander2020multi}.
\end{proof}
\begin{remark}
Unlike \cite[Prop. 2]{elvander2020multi}, our data $\{\bm{\xi}^{i}(\tau_{\sigma})\}_{i=1}^{n}$ are scattered, i.e., not on a fixed grid, hence the need for superscripts $\sigma\rightarrow\sigma + 1$ for the time-varying matrices in our Prop. \ref{PropMultiSinkPathCost}. In contrast, the corresponding matrices in \cite[Prop. 2]{elvander2020multi} are independent of $\sigma$.
\end{remark}
\begin{remark}
We note that substituting \eqref{ProjSimplified} into \eqref{MultimarginalSink} cancels the (elements of) positive vectors $\bm{u}_{\sigma}$ $\forall\sigma\in\llbracket s\rrbracket$ from the corresponding numerators and denominators. This further simplifies our multimarginal Sinkhorn recursions to
\begin{align}
&\bm{u}_{\sigma}\leftarrow \bm{\mu}_{\sigma}\oslash\!\!\left(\!\!\!\left(\!\bm{u}_1^{\!\top} K^{1\to2}\!\prod_{j=2}^{\sigma-1}\!\diag(\bm{u}_j)K^{j\to j+1}\!\!\right)^{\!\!\top}\right.\nonumber\\
&\left.\odot\left(\!\!\left(\prod_{j=\sigma+1}^{s-1}K^{j-1\to j}\diag(\bm{u}_j)\!\right)\!\!K^{s-1\to s}\bm{u}_s\!\!\right)\!\!\right)\:\forall\sigma\in\llbracket s\rrbracket.
\label{MultimargSinkFinal}    
\end{align}
\end{remark}
\begin{remark}(\textbf{From exponential to linear complexity in $s$})
We note that \eqref{MultimargSinkFinal} involves $s-1$ matrix-vector multiplications each of which has $\mathcal{O}(n^2)$ complexity. So the computational complexity for \eqref {MultimargSinkFinal} becomes $\mathcal{O}\left((s-1)n^2\right)$ which is linear in $s$, i.e., a significant reduction from earlier $\mathcal{O}\left(n^{s}\right)$ complexity mentioned at the end of Sec. \ref{subsecDiscreteFormulation}.
\end{remark}
\begin{remark}(\textbf{Linear complexity in $d$})
The dimension $d$ of the vector $\bm{\xi}$ only affects the construction of the time-varying Euclidean distance matrices $C^{\sigma\rightarrow\sigma+1}$ $\forall\sigma\in\llbracket s-1\rrbracket$ in Prop. \ref{PropMultiSinkPathCost}, which has total complexity $\mathcal{O}(sd)$. Once constructed, the recursions \eqref {MultimargSinkFinal} are independent of $d$.
\end{remark}
We next outline how the solution tensor $\bm{M}_{\rm{opt}}=\bm{K}\odot\bm{U}$ obtained from the converged Sinkhorn iterations can be used together with \eqref{Proj2Simplified}, to make stochastic predictions of the most likely hardware resource state in the form \eqref{xisim}.

\subsection{Predicting Most Likely Distribution }\label{subsec:Prediction}
\noindent
For the ground cost \eqref{GroundCost} resulting from sequential information structure (Fig. \ref{FigPathTree}), we utilize \eqref{Proj2Simplified} to decompose $\bm{M}_{\rm{opt}} = \bm{K}\odot\bm{U}$ of \eqref{DiscreteMSBP} into bimarginal transport plans
\begin{align}
    M^{\sigma_1\to\sigma_2}:={\rm{proj}}_{\sigma_1,\sigma_2}(\bm{M}_{\rm{opt}})={\rm{proj}}_{\sigma_1,\sigma_2}(\bm{K}\odot\bm{U}).
\label{BimarginalPlanM}
\end{align}
Further, when $\bm{C}$ is squared Euclidean, as we consider here, the maximum likelihood estimate for $\mu_{\tau}$ in \eqref{xisim} for a query point $\tau\in[0,t]$, is (see \cite[Sec. 2.2]{elvander2020multi})
\begin{align}
    \!\hat{\mu}_\tau \!:=\!\sum_{i=1}^n\!\sum_{j=1}^n \!\left[\!M^{\sigma\to\sigma+1}_{i,j}\!\right]\!\!\delta(\bm{\xi}-\hat{\bm{\xi}}(\tau,\bm{\xi}^{i}(\tau_{\sigma}),\bm{\xi}^{j}(\tau_{\sigma+1})))
\label{MuInterpolation}
\end{align}
where $\sigma\in\llbracket s\rrbracket$ such that $\tau\in[\tau_{\sigma},\tau_{\sigma+1}]$, and
\begin{subequations}
\begin{align}
&\!\hat{\bm{\xi}}(\tau,\bm{\xi}^{i}(\tau_{\sigma}),\bm{\xi}^{j}(\tau_{\sigma+1}))\!\!:= \!(1-\lambda)\bm{\xi}^{i}(\tau_{\sigma}) \!+\! \lambda\bm{\xi}^{j}(\tau_{\sigma+1}),\\
&\lambda:=\dfrac{\tau-\tau_{\sigma}}{\tau_{\sigma+1}-\tau_{\sigma}}\in[0,1].
\end{align}
\label{defXiInterpolation}
\end{subequations}


\section{Overall Algorithm}\label{sec:OverallAlgorithm}
\noindent
The methodology proposed in Sec. \ref{sec:formulation} is comprised of the following three overall steps.\\
\noindent\textbf{Step 1.} Given a collection of contexts (Sec. \ref{subsec:Contextc}) $\{\bm{c}^{i}\}_{i=1}^{n_{\rm{context}}}$, execute the control software over $[0,t]$ to generate hardware resource state sample snapshots (Sec. \ref{subsec:HardwareResourceStatexi}) $\{\bm{\xi}^{i}(\tau_{\sigma})\}_{i=1}^{n}$, and thereby empirical $\mu_{\sigma}$ as in \eqref{EmpiricalMeasures} for all $\sigma\in\llbracket s\rrbracket$, conditional on each of the $n_{\rm{context}}$ context samples.\\
\noindent\textbf{Step 2.} Using data from \textbf{Step 1}, construct Euclidean distance matrices $C^{\sigma\rightarrow \sigma + 1}$ from the source set $\{\bm{\xi}^{i}(\tau_{\sigma})\}_{i=1}^{n}$ to the destination set $\{\bm{\xi}^{i}(\tau_{\sigma+1})\}_{i=1}^{n}$ $\forall \sigma\in\llbracket s-1\rrbracket$. Perform recursions \eqref{MultimargSinkFinal} until convergence (error within desired tolerance).
\\
\noindent\textbf{Step 3.} Given a query context $\bm{c}$ and time $\tau\in[0,t]$, return most likely distribution $\hat{\mu}_{\tau}$ using \eqref{MuInterpolation}.

\begin{remark}\label{ReamrkSteps}
For the three steps mentioned above, \textbf{Step 1} is data generation, \textbf{Step 2} is probabilistic learning using data from \textbf{Step 1}, and \textbf{Step 3} is prediction using the learnt model.
\end{remark}


\begin{figure}[t]
    \centering    \includegraphics[width=0.8\linewidth]{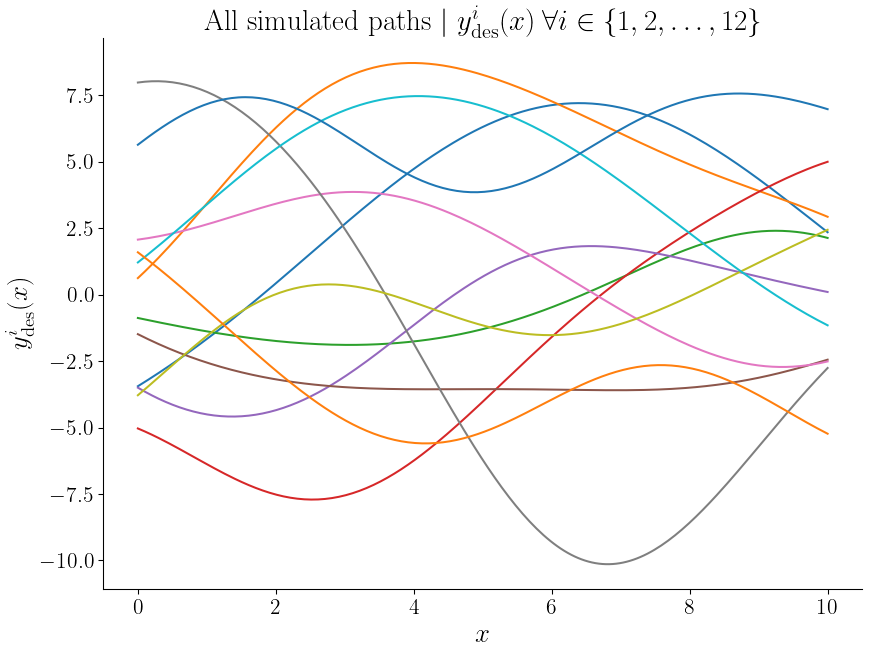}
    \caption{{\small{All 12 paths used in profiling the NMPC software (Sec. \ref{sec:numerical}), generated by GP sampling via Scikit-learn \cite{scikit-learn} over the  domain $[0,10]$ using mean zero and variance 10.}}}
\vspace*{-0.2in}
\label{FigSimPaths}
\end{figure}

\section{Numerical Case Study}\label{sec:numerical}
\noindent
In this Section, we illustrate the application of the proposed method for a vehicle path tracking control software. All along, we provide details for the three steps in Sec. \ref{sec:OverallAlgorithm}.

\noindent\textbf{Control Software.}
We wrote custom software\footnote{Git repo: {\tiny{\url{https://github.com/abhishekhalder/CPS-Frontier-Task3-Collaboration}}}} in C language implementing path following nonlinear model predictive controller (NMPC) for a kinematic bicycle model (KBM) \cite{kong2015kinematic,haddad2022density} of a vehicle with four states $(x,y,v,\psi)$ and two control inputs $(a_c,\delta)$, given by
$\dot{x}=v \cos (\psi+\beta), \dot{y}=v \sin (\psi+\beta), \dot{v}=a_c, \dot{\psi}=\frac{v}{\ell_{\text {rear}}} \sin \beta$,
where the sideslip angle
$\beta=\arctan \left(\frac{\ell_{\text {rear }}}{\ell_{\text {front }}+\ell_{\text {rear }}} \tan \delta\right)$.
The $4\times 1$ state vector comprises of the inertial position $(x, y)$ for the vehicle’s center-of-mass, its speed $v$, and the vehicle’s inertial heading angle $\psi$. The $2\times 1$ control vector comprises of
the acceleration $a_c$, and the front steering wheel angle $\delta$.

\begin{figure*}[t]
    \centering
\includegraphics[width=0.6\linewidth]{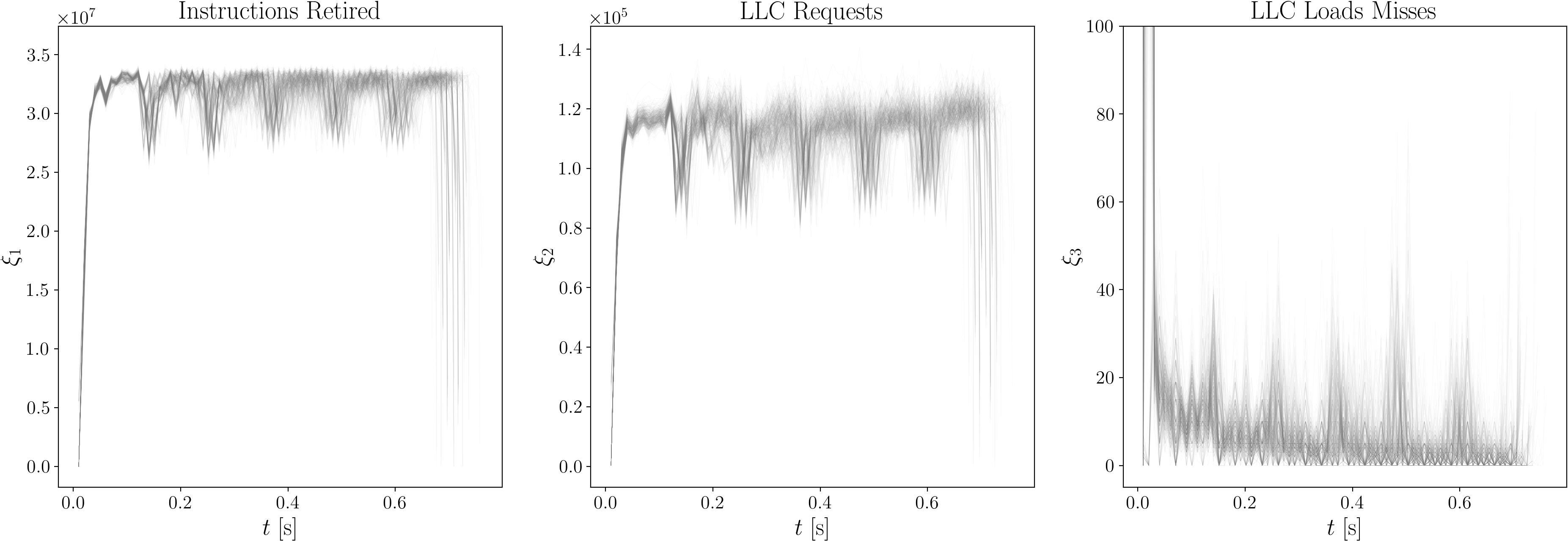}
    \caption{{\small{Components of the measured feature vector $\bm{\xi}$ in \eqref{defFeatureVec} for all of the five control cycles for 500 executions of the NMPC software, where $\bm{c}=[15,15,y^{1}_{\rm{des}}(x)]$.}}}
\vspace*{-0.2in}
\label{FigAllMeasuredStatesForOneContext}
\end{figure*}

The parameters $\ell_{\text {front }},\ell_{\text {rear }}$ are the distances of the vehicle’s
center-of-mass to the front and rear axles, respectively.

The NMPC was designed to track a desired path given as a sequence of $N=200$ waypoint tuples $\big\{(x_d^{(i)},y_d^{(i)},v_d^{(i)})\big\}_{i=1}^{N}$, i.e., a sequence of desired positions and speeds (desired speed profile was numerically estimated from the desired waypoint profiles). At every control step (at most every 100 ms), using the IPOPT nonlinear program solver \cite{IPOPT}, the NMPC solved
an optimization problem to minimize the sum of the crosstrack, $\psi$, and $v$ errors, along with the magnitude and rapidity in change of the control inputs, over a period of time from 0 to the time horizon $H_p=4$, subject to control magnitude and slew rate constraints.
For formulation details and control performance achieved, see \cite{GitDocKBMControllers}. For implementation and parameter values, we refer the readers to the Git repository in the footnote.


While closing the loop with KBM  with computed control values requires minimal computational overhead, the NMPC is computationally demanding. In the case where multiple vehicle controllers are available, it is of practical interest to predict the hardware resource usage for the NMPC for one to several control cycles, conditional on the CPS context $\bm{c}$ (Sec. \ref{subsec:Contextc}) at a given time. For this we `profile' the NMPC, meaning we run the software many times for different values of $\bm{c}$ as in \eqref{defContextVec}, measuring time evolution of the hardware resource state $\bm{\xi}$ as in \eqref{defFeatureVec}. We use these profiles to generate marginals $\mu_{\sigma}$ as in \eqref{EmpiricalMeasures}, thus completing \textbf{Step 1} in Sec. \ref{sec:OverallAlgorithm}.

We next provide details on generating control software execution profiles for our specific case study.\\ 
\noindent\textbf{Generating Execution Profiles.}  To gather the execution profiles for our NMPC control software, we used an Ubuntu 16.04.7 Linux machine with an Intel Xeon E5-2683 v4 CPU.
We leveraged Intel’s Cache Allocation Technology (CAT) \cite{intel-2015-cat} and Memguard \cite{Yun16-memguard-journal} to control allocation of LLC partitions and memory bandwidth available to the control software, respectively. Both LLC partitions and memory bandwidth were allocated in blocks of 2MB.

\if 0
To gather the execution profiles for our NMPC control software, we used an Ubuntu 16.04.7 Linux machine with an Intel Xeon E5-2683 v4 CPU.
This CPU supports Intel’s Cache Allocation Technology (CAT) \cite{intel-2015-cat} to ensure that cache is isolated on each core.
CAT uses model-specific registers to allocate partitions of the LLC to various CPUs.
The number and size of these partitions are based on the number of set-associative ways that the CPU has, which for us was 20 partitions, each being 2MB in size.
To ensure memory bandwidth is also controlled, we leveraged Memguard \cite{Yun16-memguard-journal}, which is a software-based technique to control the amount of memory requests a core can make within a configurable time window.
If too many requests are made too quickly, Memguard preempts and suspends the core until the next replenishment period.
\fi


Utilizing these resource partitioning mechanisms, we ran our application on an isolated CPU and used the Linux perf tool \cite{perf}, version 4.9.3, to sample $\bm{\xi}$ every 10 ms.

For each run of our application, we set the cache and memory bandwidth to a static allocation and pass as input a path for the NMPC to follow, represented as an array of desired $(x,y)$ coordinates.
We then execute the control software for $n_c:=5$ uninterrupted ``control cycles", wherein the NMPC gets the KBM state, makes a control decision, and updates the KBM state.

We profile over 12 unique desired paths to track, denoted $\{y_{\rm{des}}^{i}(x)\}_{i=1}^{12}$, and 5 unique vectors of $\{\bm{c}_{\rm{cyber}}^{i}\}_{i=1}^{5}$, comprising $n_{\rm{context}} = 12 \times 5 = 60$ samples for $\bm{c}$. Conditional on each of these $60$ context samples $\{\bm{c}^{i}\}_{i=1}^{n_{\rm{context}}=60}$, we run the software for 500 profiles for each unique $\bm{c}$ for a total of 30,000 profiles.

The sample paths $\{y_{\rm{des}}^{i}(x)\}_{i=1}^{12}$ in \eqref{defcPhysical} were all generated for $x\in[0,10]$ using a GP with mean zero and variance 10 \cite{scikit-learn}, and are shown in Fig. \ref{FigSimPaths}.

Our vectorial samples $\{\bm{c}_{\rm{cyber}}^{i}\}_{i=1}^{5}$ in \eqref{defcCyber} were $[1,1]^\top$, $[5,5]^\top$, $[10,10]^\top$, $[15,15]^\top$, and $[20,20]^\top$, where each entry represents the number of cache/memory bandwidth partitions from $1$ to $20$. These values were selected to broadly cover the range of possible hardware contexts.

\begin{figure}[t]
    \centering    \includegraphics[width=0.7\linewidth]{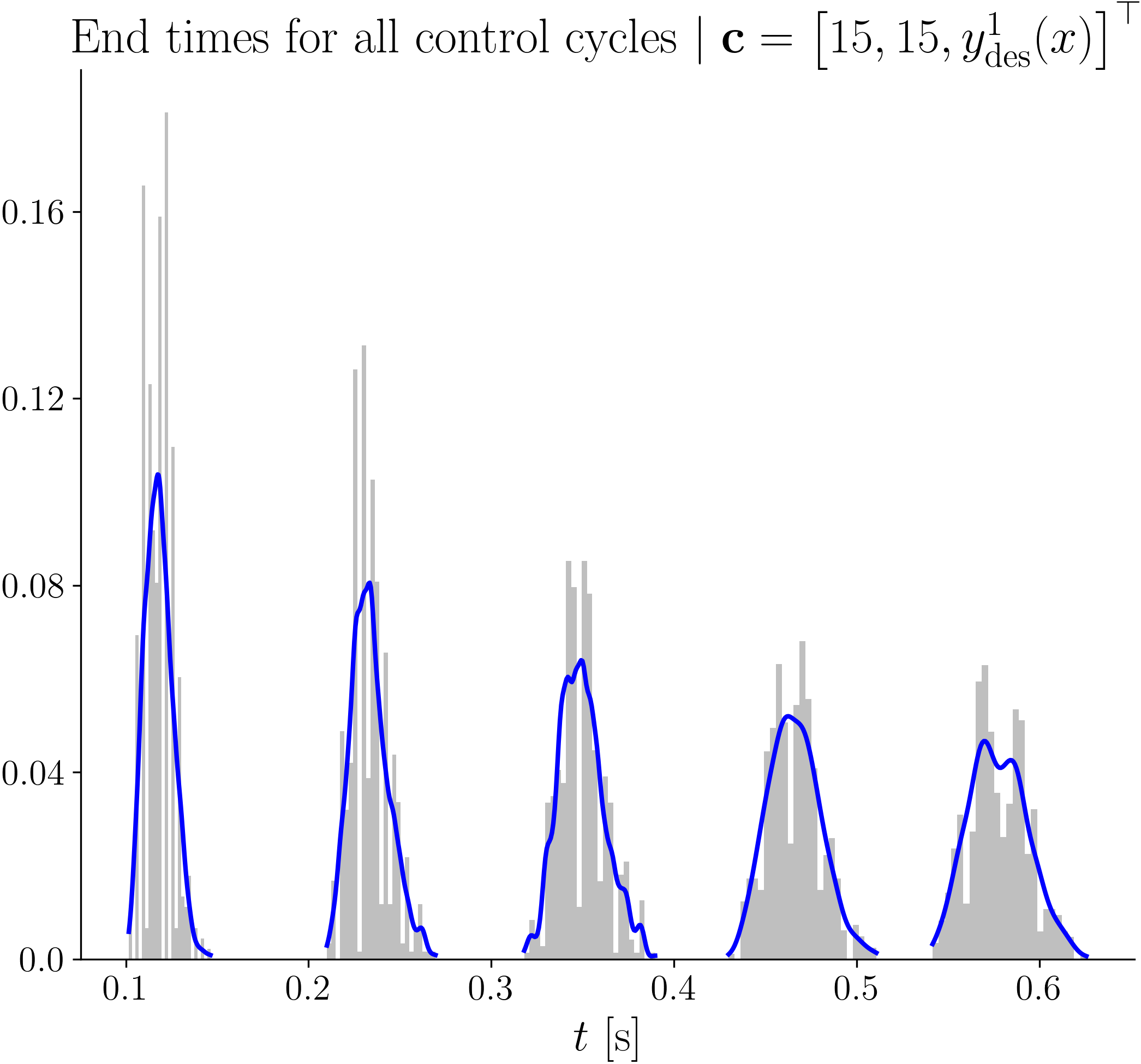}
    \caption{{\small{Normalized histograms (\emph{gray filled}) and kernel density estimates (KDEs) (\emph{solid line}) for the end times of all of the five control cycles for 500 executions of the NMPC software conditioned on a fixed CPS context $\bm{c}$ shown above. The KDEs used Gaussian kernel with bandwidths computed via cross validation \cite{bowman1984alternative,hall1992smoothed}.}}}
\vspace*{-0.2in}
\label{FigControlCycleEndTimesUnnormalizedHistogram}
\end{figure}
\begin{table}[h]
\centering
\begin{tabular}{| c | c | c |} 
 \hline
 ~Control cycle~ & ~Mean~ & ~Standard deviation~\\
 \hline\hline
 \#1 & 0.1181 & 0.0076 \\
 \hline
 \#2 & 0.2336 & 0.0106 \\
 \hline
\#3 & 0.3495 & 0.0127 \\
\hline
\#4 & 0.4660 & 0.0143 \\
\hline
\#5 & 0.5775 & 0.0159 \\
\hline
\end{tabular}
\caption{{\small{The means and standard deviations of the end times for the $n_{c}=5$ control cycles data shown in Fig. \ref{FigControlCycleEndTimesUnnormalizedHistogram}.}}}
\label{tab:ControlCycleEndTimeMeanStdDev}
\vspace*{-0.1in}
\end{table}

\begin{figure}[t]
    \centering    \includegraphics[width=0.9\linewidth]{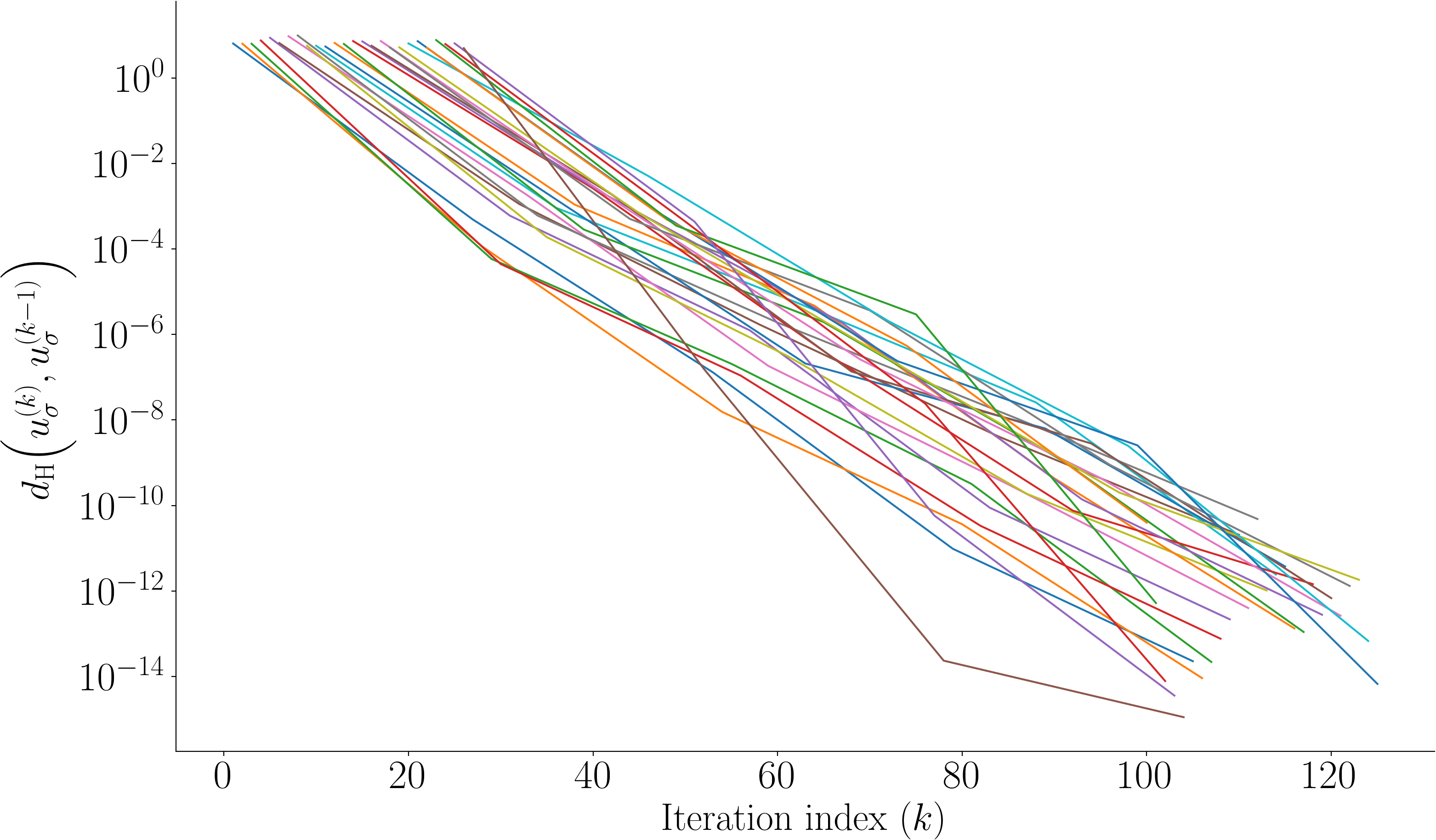}
    \caption{{\small{Linear convergence of Sinkhorn iterations \eqref{MultimargSinkFinal} for $s_{\rm{int}}=4$ w.r.t. the Hilbert's projective metric $d_{\rm{H}}$ in \eqref{HilbertMetric} between $u_{\sigma\in\llbracket s\rrbracket}$ at iteration indices $k$ and $k-1$.}}}
\vspace*{-0.22in}
\label{FigConvergenceHilbert}
\end{figure}
\begin{figure}[h]
    \centering    \includegraphics[width=0.9\linewidth]{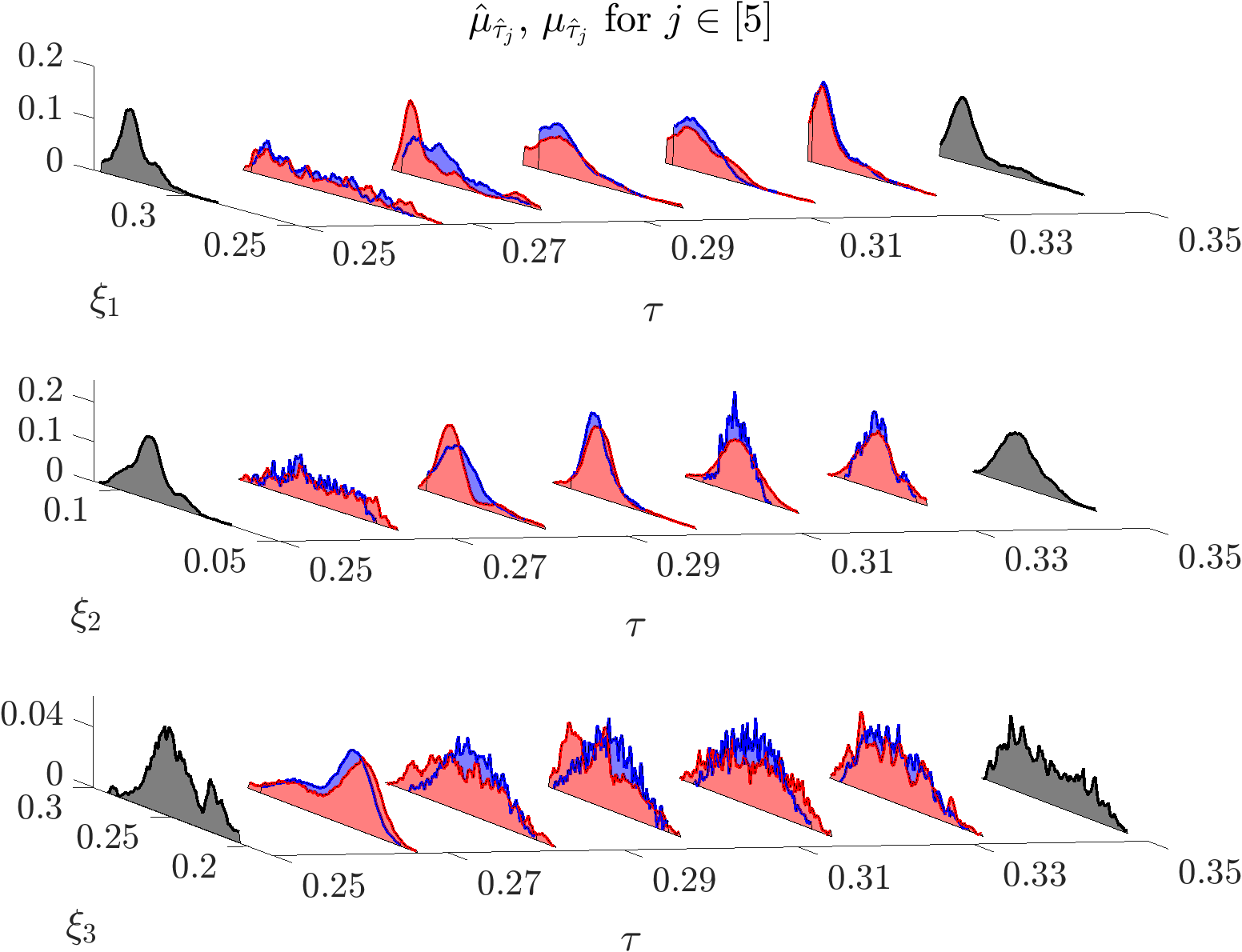}
    \caption{{\small{Predicted $\hat{\mu}_{\hat{\tau}_j}$ (\emph{blue}) vs. measured $\mu_{\hat{\tau}_j}$ (\emph{red}) at times $\hat{\tau}_{j\in\llbracket 5\rrbracket}$ during the 3rd control cycle with $s_{\rm{int}}=4$. Distributions at the control cycle boundaries are in \emph{black}.}}}
\vspace*{-0.22in}
\label{FigInterpMarginals1D}
\end{figure}

\begin{table}[h]
\centering
\begin{tabular}{| c | c | c | c | c | c |} 
 \hline
 ~$s_{\rm{int}}$~ & ~$W_{1}$~ & ~$W_{2}$~ & ~$W_{3}$~ & ~$W_{4}$~ & ~$W_{5}$~\\
 \hline\hline
 $0$ & $2.0489$ & - & - & - & - \\
 \hline
 $1$ & $2.2695$ & $1.1750$ & - & - & - \\
 \hline
 $2$ & $5.7717$ & $0.9163$ & $0.3794$ & - & - \\
 \hline
 $3$ & $2.2413$ & $1.6432$ & $1.2345$ & $0.6010$ & -  \\
\hline
 $4$ & $0.6372$ & $1.2691$ & $0.9176$ & $0.6689$ & $0.2111$ \\
\hline
\end{tabular}
\caption{{\small{Number of intracycle marginals $s_{\rm{int}}$ vs. Wasserstein distances $W_{j}$ as in \eqref{Wasserstein}. All entries are scaled up by $10^{4}$.}}}
\label{tab:WassersteinErrorsVaryingSint}
\vspace*{-0.2in}
\end{table}

\noindent \textbf{Applying the Proposed Algorithm.} Given a query context $\bm{c}$, we determine the closest CPS context for which profiling data is available, using the Euclidean distance between cyber context vectors \eqref{defcCyber}, and the Fr\'echet distance \cite{Eiter1994ComputingDF} between physical context curves \eqref{defcPhysical}. In this case study, we consider a query context with closest $\bm{c}_{\rm{cyber}}=\begin{bmatrix} 15, 15\end{bmatrix}^\top$ and closest $\bm{c}_{\rm{phys}}=y_{\rm{des}}^1(x)$. Profiling data for this $\bm{c}$ is shown in Fig. \ref{FigAllMeasuredStatesForOneContext}.

For each of the $n=500$ profiles, we are given the end times for each of the $n_{c}=5$ control cycles. We then determine the statistics of the cycle end times (Fig. \ref{FigControlCycleEndTimesUnnormalizedHistogram}) and compute the empirical distributions of $\bm{\xi}$ at the means (Table \ref{tab:ControlCycleEndTimeMeanStdDev}) of the control cycle start/end time boundaries. For empirical distributions at times between cycle boundaries, we let $s_{\rm{int}}$ be the number of marginals equispaced-in-time between each cycle boundary. We then set $\tau_{\sigma\in\llbracket s\rrbracket}$ from the means in Table \ref{tab:ControlCycleEndTimeMeanStdDev}, where $s:=1+n_c(s_{\rm{int}}+1)$ and $\tau_{\sigma(s_{\rm{int}}+1)+1}$ is the sampled mean end time for the $\sigma$th control cycle.

Our distributions are as per \eqref{EmpiricalMeasures}, where $\bm{\xi}^{i}(\tau_{\sigma})$ is the sample of the hardware resource state \eqref{defFeatureVec} at time $\approx\tau_\sigma$ (within 5ms) for profile $i$ given context $\bm{c}$. 

We set $\varepsilon=0.1$ and solve the discrete MSBP \eqref{DiscreteMSBP} with squared Euclidean cost $\bm{C}$ using \eqref{MultimargSinkFinal}. Fig. \ref{FigConvergenceHilbert} shows that the Sinkhorn iterations converge linearly (Sec. \ref{subsec:ConvergenceMultiSink}). We emphasize here that the computational complexity of proposed algorithm is minimal, thanks to the path structure of the information. In particular, we solve the MSBP \eqref{DiscreteMSBP} with $n^{s} = 500^{26}$ decision variables in approx. 10 s in MATLAB on an Ubuntu 22.04.2 LTS Linux machine with an AMD Ryzen 7 5800X CPU.

Fig. \ref{FigInterpMarginals1D} compares predicted versus observed empirical distributions. Specifically, Fig. \ref{FigInterpMarginals1D} shows $s_{\rm{int}}+1=5$ distributional predictions $\hat{\mu}_{\hat{\tau}_j}$ at times $\hat{\tau}_j$, temporally equispaced throughout the duration of the 3rd control cycle, i.e., between $\tau_{2(s_{\rm{int}}+1)+1}$ and $\tau_{3(s_{\rm{int}}+1)+1}$, with
$$ \hat{\tau}_j=\tau_{2(s_{\rm{int}}+1)+1} + \Bigg(\frac{\tau_{3(s_{\rm{int}}+1)+1}-\tau_{2(s_{\rm{int}}+1)+1}}{s_{\rm{int}}+2}\Bigg)j, $$
where $j\in\llbracket s_{\rm{int}}+1 \rrbracket$. We used \eqref{MuInterpolation} with $\sigma=2(s_{\rm{int}}+1)+j$, since $\hat{\tau}_j\in[\tau_{2(s_{\rm{int}}+1)+j},\tau_{2(s_{\rm{int}}+1)+j+1}]$.

From Fig. \ref{FigInterpMarginals1D} it is clear that the measure-valued predictions, while largely accurate, are prone to error in cases where the software resource usage behavior changes in bursts too short to be appear in our observations. It follows that increasing the number of snapshots should yield an improvement in overall accuracy. In this example, we achieve this by increasing $s_{\rm{int}}$. Table \ref{tab:WassersteinErrorsVaryingSint} reports the Wasserstein distances $W(\cdot,\cdot)$ between the corresponding predicted and measured distributions:
\begin{align}
W_j := W(\hat{\mu}_{\hat{\tau}_j},\mu_{\hat{\tau}_j}) \quad\forall j\in\llbracket s_{\rm{int}}+1 \rrbracket.
\label{Wasserstein}
\end{align}
We computed each of these $W_j$ as the square root of the optimal value of the corresponding linear program that results from specializing \eqref{DiscreteMSBP} with $s=2$, $\varepsilon=0$.


\vspace*{-0.05in}
\section{Concluding Remarks}\label{sec:conclusions}
\noindent We apply recent algorithmic advances in solving the MSBP to learn stochastic hardware resource usage by control software. The learnt model demonstrates accurate nonparametric measure-valued predictions for the joint hardware resource state at a desired time conditioned on CPS context. The formulation and its solution comes with a maximum likelihood guarantee in the space of probability measures, and the algorithm enjoys a guaranteed linear convergence rate.

\bibliographystyle{IEEEtran}
\bibliography{references.bib}

\begin{thebibliography}{10}
\providecommand{\url}[1]{#1}
\csname url@samestyle\endcsname
\providecommand{\newblock}{\relax}
\providecommand{\bibinfo}[2]{#2}
\providecommand{\BIBentrySTDinterwordspacing}{\spaceskip=0pt\relax}
\providecommand{\BIBentryALTinterwordstretchfactor}{4}
\providecommand{\BIBentryALTinterwordspacing}{\spaceskip=\fontdimen2\font plus
\BIBentryALTinterwordstretchfactor\fontdimen3\font minus
  \fontdimen4\font\relax}
\providecommand{\BIBforeignlanguage}[2]{{%
\expandafter\ifx\csname l@#1\endcsname\relax
\typeout{** WARNING: IEEEtran.bst: No hyphenation pattern has been}%
\typeout{** loaded for the language `#1'. Using the pattern for}%
\typeout{** the default language instead.}%
\else
\language=\csname l@#1\endcsname
\fi
#2}}
\providecommand{\BIBdecl}{\relax}
\BIBdecl

\bibitem{Bernat2002}
G.~Bernat, A.~Colin, and S.~Petters, ``{WCET} analysis of probabilistic hard
  real-time systems,'' in \emph{23rd IEEE Real-Time Systems Symposium, 2002
  (RTSS)}, 2002, pp. 279--288.

\bibitem{Mingsong2009}
M.~Lv, N.~Guan, Y.~Zhang, Q.~Deng, G.~Yu, and J.~Zhang, ``A survey of {WCET}
  analysis of real-time operating systems,'' in \emph{2009 International
  Conference on Embedded Software and Systems}, 2009, pp. 65--72.

\bibitem{gifford2021dna}
R.~Gifford, N.~Gandhi, L.~T.~X. Phan, and A.~Haeberlen, ``Dna: Dynamic resource
  allocation for soft real-time multicore systems,'' in \emph{2021 IEEE 27th
  Real-Time and Embedded Technology and Applications Symposium (RTAS)}.\hskip
  1em plus 0.5em minus 0.4em\relax IEEE, 2021, pp. 196--209.

\bibitem{Zhang2016}
K.~Zhang, J.~Sprinkle, and R.~G. Sanfelice, ``Computationally aware switching
  criteria for hybrid model predictive control of cyber-physical systems,''
  \emph{IEEE Transactions on Automation Science and Engineering}, vol.~13,
  no.~2, pp. 479--490, 2016.

\bibitem{bushell1973hilbert}
P.~J. Bushell, ``Hilbert's metric and positive contraction mappings in a
  {B}anach space,'' \emph{Archive for Rational Mechanics and Analysis},
  vol.~52, pp. 330--338, 1973.

\bibitem{elvander2020multi}
F.~Elvander, I.~Haasler, A.~Jakobsson, and J.~Karlsson, ``Multi-marginal
  optimal transport using partial information with applications in robust
  localization and sensor fusion,'' \emph{Signal Processing}, vol. 171, p.
  107474, 2020.

\bibitem{haasler2021multimarginal}
I.~Haasler, A.~Ringh, Y.~Chen, and J.~Karlsson, ``Multimarginal optimal
  transport with a tree-structured cost and the {S}chrodinger bridge problem,''
  \emph{SIAM Journal on Control and Optimization}, vol.~59, no.~4, pp.
  2428--2453, 2021.

\bibitem{leonard2014survey}
C.~L{\'e}onard, ``A survey of the {S}chr{\"o}dinger problem and some of its
  connections with optimal transport,'' \emph{Discrete and Continuous Dynamical
  Systems-Series A}, vol.~34, no.~4, pp. 1533--1574, 2014.

\bibitem{chen2021stochastic}
Y.~Chen, T.~T. Georgiou, and M.~Pavon, ``Stochastic control liaisons: {R}ichard
  {S}inkhorn meets {Gaspard Monge on a Schrodinger} bridge,'' \emph{Siam
  Review}, vol.~63, no.~2, pp. 249--313, 2021.

\bibitem{dembo2009large}
A.~Dembo and O.~Zeitouni, \emph{Large deviations techniques and
  applications}.\hskip 1em plus 0.5em minus 0.4em\relax Springer Science \&
  Business Media, 2009, vol.~38.

\bibitem{follmer1988random}
H.~Follmer, ``Random fields and diffusion processes,'' \emph{Ecole d'Ete de
  Probabilites de Saint-Flour XV-XVII, 1985-87}, 1988.

\bibitem{sanov1958probability}
I.~N. Sanov, \emph{On the probability of large deviations of random
  variables}.\hskip 1em plus 0.5em minus 0.4em\relax United States Air Force,
  Office of Scientific Research, 1958.

\bibitem{pavon2021data}
M.~Pavon, G.~Trigila, and E.~G. Tabak, ``The data-driven {S}chr{\"o}dinger
  bridge,'' \emph{Communications on Pure and Applied Mathematics}, vol.~74,
  no.~7, pp. 1545--1573, 2021.

\bibitem{csiszar1975divergence}
I.~Csisz{\'a}r, ``I-divergence geometry of probability distributions and
  minimization problems,'' \emph{The annals of probability}, pp. 146--158,
  1975.

\bibitem{borwein1994entropy}
J.~M. Borwein, A.~S. Lewis, and R.~D. Nussbaum, ``Entropy minimization, {DAD}
  problems, and doubly stochastic kernels,'' \emph{Journal of Functional
  Analysis}, vol. 123, no.~2, pp. 264--307, 1994.

\bibitem{ruschendorf2002n}
L.~R{\"u}schendorf and L.~Uckelmann, ``On the n-coupling problem,''
  \emph{Journal of multivariate analysis}, vol.~81, no.~2, pp. 242--258, 2002.

\bibitem{pass2015multi}
B.~Pass, ``Multi-marginal optimal transport: theory and applications,''
  \emph{ESAIM: Mathematical Modelling and Numerical Analysis-Mod{\'e}lisation
  Math{\'e}matique et Analyse Num{\'e}rique}, vol.~49, no.~6, pp. 1771--1790,
  2015.

\bibitem{agueh2011barycenters}
M.~Agueh and G.~Carlier, ``Barycenters in the {W}asserstein space,'' \emph{SIAM
  Journal on Mathematical Analysis}, vol.~43, no.~2, pp. 904--924, 2011.

\bibitem{brenier2008generalized}
Y.~Brenier, ``Generalized solutions and hydrostatic approximation of the
  {E}uler equations,'' \emph{Physica D: Nonlinear Phenomena}, vol. 237, no.
  14-17, pp. 1982--1988, 2008.

\bibitem{benamou2019generalized}
J.-D. Benamou, G.~Carlier, and L.~Nenna, ``Generalized incompressible flows,
  multi-marginal transport and sinkhorn algorithm,'' \emph{Numerische
  Mathematik}, vol. 142, pp. 33--54, 2019.

\bibitem{carlier2015numerical}
G.~Carlier, A.~Oberman, and E.~Oudet, ``Numerical methods for matching for
  teams and {W}asserstein barycenters,'' \emph{ESAIM: Mathematical Modelling
  and Numerical Analysis}, vol.~49, no.~6, pp. 1621--1642, 2015.

\bibitem{buttazzo2012optimal}
G.~Buttazzo, L.~De~Pascale, and P.~Gori-Giorgi, ``Optimal-transport formulation
  of electronic density-functional theory,'' \emph{Physical Review A}, vol.~85,
  no.~6, p. 062502, 2012.

\bibitem{cotar2013density}
C.~Cotar, G.~Friesecke, and C.~Kl{\"u}ppelberg, ``Density functional theory and
  optimal transportation with {C}oulomb cost,'' \emph{Communications on Pure
  and Applied Mathematics}, vol.~66, no.~4, pp. 548--599, 2013.

\bibitem{peyre2019computational}
G.~Peyr{\'e} and M.~Cuturi, ``Computational optimal transport: With
  applications to data science,'' \emph{Foundations and Trends{\textregistered}
  in Machine Learning}, vol.~11, no. 5-6, pp. 355--607, 2019.

\bibitem{cuturi2013sinkhorn}
M.~Cuturi, ``Sinkhorn distances: Lightspeed computation of optimal transport,''
  \emph{Advances in neural information processing systems}, vol.~26, 2013.

\bibitem{franklin1989scaling}
J.~Franklin and J.~Lorenz, ``On the scaling of multidimensional matrices,''
  \emph{Linear Algebra and its applications}, vol. 114, pp. 717--735, 1989.

\bibitem{benamou2015iterative}
J.-D. Benamou, G.~Carlier, M.~Cuturi, L.~Nenna, and G.~Peyr{\'e}, ``Iterative
  {B}regman projections for regularized transportation problems,'' \emph{SIAM
  Journal on Scientific Computing}, vol.~37, no.~2, pp. A1111--A1138, 2015.

\bibitem{bauschke2000dykstras}
H.~H. Bauschke and A.~S. Lewis, ``Dykstras algorithm with {B}regman
  projections: A convergence proof,'' \emph{Optimization}, vol.~48, no.~4, pp.
  409--427, 2000.

\bibitem{marino2020optimal}
S.~D. Marino and A.~Gerolin, ``An optimal transport approach for the
  {S}chr{\"o}dinger bridge problem and convergence of {S}inkhorn algorithm,''
  \emph{Journal of Scientific Computing}, vol.~85, no.~2, p.~27, 2020.

\bibitem{carlier2022linear}
G.~Carlier, ``On the linear convergence of the multimarginal {S}inkhorn
  algorithm,'' \emph{SIAM Journal on Optimization}, vol.~32, no.~2, pp.
  786--794, 2022.

\bibitem{scikit-learn}
F.~Pedregosa, G.~Varoquaux, A.~Gramfort, V.~Michel, B.~Thirion, O.~Grisel,
  M.~Blondel, P.~Prettenhofer, R.~Weiss, V.~Dubourg, J.~Vanderplas, A.~Passos,
  D.~Cournapeau, M.~Brucher, M.~Perrot, and E.~Duchesnay, ``Scikit-learn:
  Machine learning in {P}ython,'' \emph{Journal of Machine Learning Research},
  vol.~12, pp. 2825--2830, 2011.

\bibitem{kong2015kinematic}
J.~Kong, M.~Pfeiffer, G.~Schildbach, and F.~Borrelli, ``Kinematic and dynamic
  vehicle models for autonomous driving control design,'' in \emph{2015 IEEE
  intelligent vehicles symposium (IV)}.\hskip 1em plus 0.5em minus 0.4em\relax
  IEEE, 2015, pp. 1094--1099.

\bibitem{haddad2022density}
S.~Haddad, A.~Halder, and B.~Singh, ``Density-based stochastic reachability
  computation for occupancy prediction in automated driving,'' \emph{IEEE
  Transactions on Control Systems Technology}, vol.~30, no.~6, pp. 2406--2419,
  2022.

\bibitem{IPOPT}
A.~Wächter and L.~T. Biegler, ``On the implementation of an interior-point
  filter line-search algorithm for large-scale nonlinear programming,''
  \emph{Mathematical Programming}, vol. 106, no.~1, pp. 25--57, 2006.

\bibitem{GitDocKBMControllers}
\url{https://github.com/abhishekhalder/CPS-Frontier-Task3-Collaboration/blob/master/Codes/kbm_sim/Documentation_KinematicBicycle_Controllers.pdf},
  accessed: 2023-09-29.

\bibitem{intel-2015-cat}
{Intel Corporation}, ``Improving real-time performance by utilizing cache
  allocation technology,'' Apr. 2015, {White Paper}.

\bibitem{Yun16-memguard-journal}
H.~Yun, G.~Yao, R.~Pellizzoni, M.~Caccamo, and L.~Sha, ``Memory bandwidth
  management for efficient performance isolation in multi-core platforms,''
  \emph{IEEE Transactions on Computers}, vol.~65, no.~2, pp. 562--576, Feb
  2016.

\bibitem{perf}
``perf(1) — linux manual page,''
  \url{https://man7.org/linux/man-pages/man1/perf.1.html}, accessed:
  2023-09-29.

\bibitem{bowman1984alternative}
A.~W. Bowman, ``An alternative method of cross-validation for the smoothing of
  density estimates,'' \emph{Biometrika}, vol.~71, no.~2, pp. 353--360, 1984.

\bibitem{hall1992smoothed}
P.~Hall, J.~Marron, and B.~U. Park, ``Smoothed cross-validation,''
  \emph{Probability theory and related fields}, vol.~92, no.~1, pp. 1--20,
  1992.

\bibitem{Eiter1994ComputingDF}
\BIBentryALTinterwordspacing
T.~Eiter and H.~Mannila, ``Computing discrete {F}r\'{e}chet distance,'' 1994.
  [Online]. Available: \url{https://api.semanticscholar.org/CorpusID:16010565}
\BIBentrySTDinterwordspacing

\end{thebibliography}

\end{document}